\newcommand{\samp}{\textsf{SAMP}~}
\newcommand{\cond}{\textsf{COND}~}
\newcommand{\eval}{\textsf{EVAL}~}
\newcommand{\pcond}{\textsf{PCOND}~}
\newcommand{\intcond}{\textsf{INTCOND}~}
\newcommand{\accept}{\textsf{Accept}~}
\newcommand{\reject}{\textsf{Reject}~}
\newcommand{\compare}{\emph{Compare}~}
\title{Distribution Testing Meets Sum Estimation} 
\author{Pinki Pradhan}{National Institute of Science Education and Research, An OCC of Homi Bhabha National Institute, Bhubaneswar, Odisha, India}{pinki.pradhan@niser.ac.in}{}{}
\author{Sampriti Roy}{Department of Computer Science and Engineering, IIT Madras, Chennai}{cs18d200@smail.iitm.ac.in}{}{}
\authorrunning{P. Pradhan, S. Roy}
\keywords{Sum estimation, Conditional sampling, Hybrid sampling, Monotone, Unimodal} 
\begin{document}

\maketitle
\nolinenumbers
\begin{abstract}

 We study the problem of estimating the sum of \( n \) elements, each associated with a weight $ w(i) $, within a structured universe. Specifically, our goal is to estimate the sum $W = \sum_{i=1}^n w(i)$ within a factor of $(1 \pm \epsilon)$ using a sublinear number of samples, under the assumption that the weights are non-increasing, i.e., $ w(1) \geq w(2) \geq \dots \geq w(n) $. The sum estimation problem is well-studied under different access models to the universe $U$. 
 However, to the best of our knowledge, nothing is known about the sum estimation problem using non-adaptive conditional sampling. To address this gap, we explore the sum estimation problem using non-adaptive conditional weighted and non-adaptive conditional uniform samples, assuming that the underlying distribution ($D(i)=w(i)/W$) is monotone. 
 Furthermore, we extend our approach to the case where the underlying distribution of $U$ is unimodal. Additionally, we investigate the problem of approximating the support size of $U$ when $w(i) = 0$ or $w(i) \geq W/n$ using hybrid sampling, where both weighted and uniform sampling are available for accessing $U$.

We provide an algorithm for estimating the sum of $n$ elements where the weights of the elements are non-increasing. Our algorithm requires $O(\frac{1}{\epsilon^3}\log{n}+\frac{1}{\epsilon^6})$ non-adaptive weighted conditional samples and $O(\frac{1}{\epsilon^3}\log{n})$ non-adaptive uniform conditional samples.
Our algorithm also follows the $\Omega(\log{n})$ lower bound proposed by \cite{ACK15}. We also extend our algorithm when the underlying distribution $D$ is unimodal. The sample complexity of the algorithm is the same as that of monotone with an additional $O(\log{n})$ adaptive evaluation queries to find the minimum weighted point in the domain $[n]$. Additionally, we investigate the problem of estimating the support size of $U$, which consists of $n$ elements such that the weight corresponding to them is either $0$ or at least $W/n$. The algorithm uses $O\big( \frac{\log^3{(n/\epsilon)}}{\epsilon^8}\cdot \log^4{\frac{\log{(n/\epsilon)}}{\epsilon}}\big)$ uniform samples and $O\big( \frac{\log{(n/\epsilon)}}{\epsilon^2}\cdot \log{\frac{\log{(n/\epsilon)}}{\epsilon}}\big)$ weighted samples from the universe and approximate the support size $k$ such that $k-2\epsilon n\leq\hat{k}\leq k+\epsilon n$.

\end{abstract}

\section{Introduction}
In recent years, distribution testing and sum estimation have attracted considerable attention due to their applications in statistics and data science. Distribution testing is a subfield of property testing that aims to determine whether an unknown distribution satisfies a certain property \(\mathcal{P}\) or if it is \(\epsilon\)-far from \(\mathcal{P}\). In this context, the distribution is accessed through samples, and the key measure of interest is the number of samples needed to test the property effectively (see \cite{Fischer01}, \cite{Goldreich10}, \cite{Goldreich17}, \cite{BY22} for comprehensive overviews).

On the other hand, sum estimation involves efficiently approximating the total sum of a set, where each element is assigned a weight. Let $U$ be a universe of $n$ elements, with each element $i \in [n]$ associated with a weight $w(i): U \to [0, \infty)$. We do not have access to the entire universe $U$, and the goal is to compute the sum $W = \sum_{i=1}^{n} w(i)$ by sampling elements from $U$. Each time we get a sample, we observe one element along with its weight to estimate $W$ using a sublinear number of samples. Since exact computation of $W$ requires $\Omega(n)$ many samples, we aim to provide an estimated value $\hat{W}$ such that for any $\epsilon \in (0, 1)$, $(1-\epsilon)W \leq \hat{W} \leq (1+\epsilon) W$ with probability at least $\frac{2}{3}$. The universe of elements $U$ can be viewed as an unknown distribution $D$. Sampling an element from $U$ is equivalent to drawing an index $i$ independently and identically distributed according to $D$ in the standard access model. For example, the underlying distribution of the universe can be thought of as $D(i) = \frac{w(i)}{W}$ and sampling an element from $U$ according to its weight is equivalent to sampling an element $i$ according to the standard access oracle of $D$. Each sample we get is a pair containing the index and the weight of the element corresponding to this index, i.e., $(i, w(i))$.  

The sum estimation problem was first introduced by Motwani et al. \cite{MPX07}, where the authors developed a weighted sampling method to estimate \(W = \sum_{i \in [n]} w(i)\). This was later improved by Beretta and  T\v{e}tek \cite{BT24}. However, an open question remains whether it is possible to further optimize the sample complexity under certain assumptions about the universe. In addition to fundamental problems such as testing uniformity, identity, and closeness, distribution testing has been extended to cases where the distribution is assumed to have a specific structure. As we consider an underlying distribution $D$ corresponding to the universe's weights, it is worth checking whether the sample complexity of the classical sum estimation problem can be improved under some assumptions about the structure of $D$.

Another interesting connection between distribution testing and sum estimation is exploring whether approximation problems in distribution testing can be applied to the universe of a set of elements when the weights are unknown. For example, consider the problem of estimating the support size of a distribution. Let \(D\) be a distribution over \([n]\), and the support size of \(D\) is defined as the number of non-zero elements in the domain \([n]\). A similar question arises when considering a universe of \(n\) elements, where we aim to estimate the support size without knowing the weights.

\subsection{Sampling Oracles}
The primary goal of a distribution testing algorithm is to test the properties of an unknown distribution without accessing the entire distribution. Similarly, in the case of sum estimation, the goal is to estimate the total weight of a set of elements by looking at the entire set. To achieve this, one requires oracle access to the unknown object, allowing the algorithm to sample a subset and use it as input for testing and estimation. Whether drawing samples from a particular probability distribution or from a universe of $n$ elements, both approaches utilize similar strategies. One of the most intuitive models for distribution testing involves sampling independently and identically distributed indices from an unknown distribution. More specifically, for an unknown distribution $D$ over $[n]$, the model is defined as \emph{standard access model} (\samp), which returns an index $i\in[n]$ with probability $D(i)$. Similarly, the most fundamental way of sampling an index from a universe $U$ over $[n]$ is to sample $i\in[n]$ according to the weight $w(i)$. This technique was introduced by the seminal work of Motwani et al. \cite{MPX07}. Intuitively, in both models, the more the weight of a point, the more likely we are to sample that point. Therefore, if $D$ is an underlying distribution corresponding to the set of elements in $U$, sampling an index according to the \samp is equivalent to sample an element with probability $w(i)/W$, where $W=\sum_{i\in[n]}w(i)$.

Additionally, both fields support adaptive and non-adaptive conditional sampling. To this end, Canonne et al. \cite{CFGM13} and Chakraborty et al. \cite{CRS15} introduced the conditional sampling model, which has proven to be a powerful tool for testing the properties of probability distributions more efficiently. For a fixed distribution $D$ over the domain $[n]$, a conditional oracle for $D$, \cond is defined as follows: the oracle takes as input a query set $S \subseteq [n]$ chosen by the algorithm, that has $D(S)> 0$ (when $D(S)=0$, an i.i.d sample is generated). The oracle returns an element $i \in S$, where the probability that element $i$ is returned is $D_S(i) = D(i)/D(S)$. The model is defined \intcond when the conditioning is done over an interval. Similarly, in the case of sum estimation, Acharya et al. \cite{ACK15} introduced the conditional sampling technique. For a specific set $S\subset [n]$, the oracle returns an element $i\in S$ with probability $w(i)/W(S)$, where $W(S)=\sum_{i\in S}w(i)$. However, given the set $S$, in both cases, we are also allowed to sample an element uniformly from $S$. 

In distribution testing, a general model of query that comes naturally is to sample a set of points from the domain and explicitly get their probabilities. This motivates a new model called evaluation oracle (\eval) proposed by Rubinfeld and Servedio \cite{RS05}. In particular, an evaluation oracle (\eval) for $D$ is defined as follows: the oracle takes as input a query element $i\in[n]$ and returns the probability weight $D(i)$ that the distribution puts on $i$. Similarly, we extend this model when there are $n$ weighted elements instead of a distribution, and upon querying an input $i\in[n]$, the oracle returns $w(i)$. The \eval model can be further enhanced by combining it with additional access to the standard sampling model. Specifically, this gives rise to the \emph{dual model}, where points are sampled according to the standard model, and then the sampled points are queried using the \eval oracle. This approach motivates sampling a point \( i \) from the universe based on its weight \( w(i) \) and explicitly querying for \( w(i) \). In other words, we can treat this as a dual model where, upon querying, we receive a pair \( (i, w(i)) \) with probability \( w(i)/W \). In the conditional framework, this is defined so that the pair \( (i, w(i)) \) is returned with probability \( w(i)/W(S) \). This model was explored in the work of Acharya et al. \cite{ACK15}. Similarly, in the uniform conditional setting, the pair is returned with probability \( 1/|S| \).

In this work, we investigate the sum estimation problem under a specific condition on the universe \( U \). Specifically, we focus on cases where the weights of the elements in \( U \) are non-increasing and propose an algorithm to estimate the sum of the elements in \( U \). To tackle this, we adopt the conditional access model. In this model, we use two sampling strategies: (1) given a set \( S \), we sample a pair \( (i, w(i)) \) uniformly at random from \( S \), meaning \( (i, w(i)) \) is returned with probability \( 1/|S| \); and (2) we sample a pair \( (i, w(i)) \) with probability \( w(i)/W(S) \), which is identical to \emph{dual model} for $S\subseteq [n]$. Using these techniques, we also extend our approach for estimating the sum of the elements in \( U \), particularly when the underlying distribution is unimodal. A unimodal distribution is a probability distribution where the weights of the elements decrease (or increase) up to a point $j\in[n]$ and then start increasing (or decreasing) up to $n$. To find the valley or the minimum weight element, we use the binary search approach, where check the weight of the middle element, compare it with the previous and next element, and then recurse on the one half based on the weights. In this process, we do not sample anything from $U$, rather we ask the the weight $w(i)$ for an input $i$. These queries are identical to making \eval queries, which is less powerful than \emph{dual model}. Hence, for the sake of clarity, we define the \emph{evaluation model} for $U$, where given an input $i\in[n]$, the oracle returns $w(i)$.

Additionally, we explore the problem of estimating the support size of the $U$, where support $k\subset [n]$ is defined by the number of non-zero elements in $U$. In this case, we use the hybrid model of computation, which allows us to sample a pair $(i,w(i))$ $(1)$ uniformly from $[n]$, and $(2)$ according to the probability $w(i)/W$.

\subsection{Background and Related Works}
Distribution testing was introduced by the seminal work of Goldreich and Ron (\cite{GR00}), who gave a sample-efficient algorithm for testing if an unknown distribution is uniform or $\epsilon$-far from uniform. Later, a study by Batu et al. (\cite{BFF+01}) addressed the problem of testing whether two unknown random variables share the same distribution (known as closeness or equivalence testing). This initiated a line of research focusing on testing uniformity, identity, closeness (\cite{Pan08}, \cite{BFR+10}, \cite{Valiant}, \cite{VV14}, \cite{CDVV14}, \cite{DKN15b}, \cite{DKN15a})), structure of distributions (monotonicity, $k$-modality, decomposibility etc) (\cite{BKR04}, \cite{DDS12}, \cite{CDSS13}, \cite{DDSVV13}, \cite{CDGR16}, \cite{FLV19}) and many more using the standard access model when i.i.d samples are drawn from an unknown distribution. Another interesting problem in this domain is to determine whether testing becomes easier when there is a predefined structural property. Daskalakis et al. \cite{DDSVV13} studied the identity, and closeness testing problems considering the unknown distributions to be monotone. Another work by Diakonikolas et al. \cite{DKN15} focused on testing identity for a set of distributions which are promised to be $t$ piecewise constant, log-concave, monotone hazard rate, etc. Rubinfeld and Vasiliyan \cite{RV20} studied the problem of learning a distribution over $\{0,1\}^n$ when it is known to be monotone.

Distribution testing has also been addressed under the conditional sampling model. Canonne et al. \cite{CRS15} studied the problem of testing uniformity, identity, and closeness using the power of the conditional sampling model. Later, Canonne \cite{Can15} proposed an algorithm for testing monotonicity using the same model. Canonne and Rubinfeld \cite{CR14} extensively studied distribution testing problems using the evaluation oracle and its variants. They focused on fundamental problems such as uniformity testing, identity testing, and closeness testing and proposed efficient algorithms for each of these tasks in the \eval model. Estimating the support size of a distribution has been a central problem in this field as well. Raskhodnikova et al. \cite{RRSS09} addressed the problem of approximating the support size of a distribution from a small number of samples when each element in the distribution appears with probability at least $1/n$. Later, Valiant and Valiant \cite{VV10} proved a lower bound under the standard access model for estimating the support size without any assumption on the weight. Acharya et al. \cite{ACK14} proved a non-adaptive lower bound for support size estimation using conditional queries. More recently, Chakraborty et al. \cite{CKM23} investigated the longstanding gap between the upper bound for support size estimation under conditional sampling model. Additionally, Canonne and Rubinfeld \cite{CR14} also studied this problem under \eval model with the assumption that the probability mass of each element of the domain is either zero or at least $1/n$. This line of research motivates us to explore the problem of estimating the support size when there is a universe of $n$ elements instead of a distribution.

Estimating the sum of $n$ variables is also a fundamental problem in statistics. The problem becomes simple when the weights of the elements are in the range $[0,1]$. Canetti et al. \cite{CEG95} proved a lower bound for this problem using uniform sampling. Motwani et al.\cite{MPX07} first designed a sublinear algorithm for this problem and gave an estimator for the sum using $\tilde{O}(\sqrt{n}/\epsilon^{7/2})$ weighted samples. They also studied the sum estimation problem in hybrid sampling, where in addition to weighted sampling, one can perform uniform sampling, and they gave an estimator for the sum using $\tilde{O}(\sqrt{n}/\epsilon^{9/2})$ hybrid samples. Later, Beretta and T{\v{e}}tek \cite{BT24} improved this sample complexity and gave an algorithm for estimating the sum using $\Theta(\sqrt{n}/\epsilon)$ weighted samples. They also provided an algorithm for sum estimation using $O(n^{1/3}/\epsilon^{4/3})$ hybrid samples.
Acharya et al. \cite{ACK15} explored the sum estimation using adaptive conditional sampling. They gave an algorithm to compute $\hat{W}$ by using  $O(\frac{\log^3 n \log \log n }{\epsilon^2})$ many adaptive conditional samples.  The authors used the conditional sampling over intervals which aligns well with \intcond model of distribution testing, building a strong bridge between distribution testing and sum estimation. Moreover, their work builds on the adaptive conditional queries. In contrast, we give an algorithm for sum estimation using non-adaptive conditional sampling, which is a weaker model as compared to adaptive conditional sampling with the assumption that the weights of the universe is monotonically non-increasing universe. The problem of approximating the number of distinct elements in a sequence of length $n$ has been studied by Charikar et al. \cite{CCMN00} and Bar-Yossef et al. \cite{BKS01}. These problems are closely related to the approximation of support size and motivate us to further study the same under different query models.

\subsection{Our Contributions}
We explore the problem of sum estimation when there is a specific structure of the set of elements. In particular, let $U$ defines a set of $[n]$ elements each associated with a weight $w(i)$, such that $w(1)\geq w(2)\geq...\geq w(n)$. In this setting, the goal is to find a $(1\pm \epsilon)$ approximation of the weight $W=\sum_{i\in[n]} w(i)$. We use the power of conditional weighted as well as conditional uniform sampling. Specifically, given a set $S\subseteq [n]$, the oracle returns a pair $(i,w(i))$ with probability $w(i)/W(S)$ in case of weighted conditional query. Additionally, we can also sample a pair $(i,w(i))$ uniformly at random from $S$. We go beyond the conditional sampling and address the problem of support size estimation considering hybrid model which returns a pair $(i,w(i))$ with probability $w(i)/W$ as well as a pair $(i,w(i))$ uniformly at random from $[n]$. Our key contributions are as follows:
\begin{itemize}
    \item We provide an algorithm for estimating the sum of $n$ elements where the weights of the elements are non-increasing. Our algorithm requires $O(\frac{1}{\epsilon^3}\log{n}+\frac{1}{\epsilon^6})$ weighted conditional samples and $O(\frac{1}{\epsilon^3}\log{n})$ uniform conditional samples. Our algorithm also follows the $\Omega(\log{n})$ lower bound proposed by \cite{ACK15}. We also extend our algorithm when the underlying distribution corresponding to the weights is unimodal. In particular, we estimate the sum of $[n]$ elements when the weights decrease up to a point $j\in[n]$ and then increase up to $n$. The sample complexity of the algorithm is the same as that of monotone with an additional $O(\log{n})$ adaptive \eval queries to find the minimum weighted point in the domain $[n]$. 
    \item We investigate the problem of estimating the support size of $U$, which consists of $n$ elements such that the weight corresponding to them is either $0$ or at least $W/n$. The algorithm uses $O\big( \frac{\log^3{(n/\epsilon)}}{\epsilon^8}\cdot \log^4{\frac{\log{(n/\epsilon)}}{\epsilon}}\big)$ uniform samples and $O\big( \frac{\log{(n/\epsilon)}}{\epsilon^2}\cdot \log{\frac{\log{(n/\epsilon)}}{\epsilon}}\big)$ weighted samples from the universe and approximate the support size $k$ such that $k-2\epsilon n\leq\hat{k}\leq k+\epsilon n$. 
\end{itemize}

\section{Notation and Preliminaries}
We use $U$ as a universe with set of $n$ elements each associated with weight $w(i)$, and $W=\sum_{i\in [n]}w(i)$. We also define $D$ to be the underlying probability distribution over $[n]$ such that for $i\in [n]$, $D(i)=w(i)/W$. For a set $S\subseteq [n]$ $D_S$ is defined as a conditional probability distribution over $S$, such that for $i\in [S]$, $D_S(i)=w(i)/W(S)$, where $W(S)$ denotes $\sum_{i\in [S]} w(i)$. $\mathcal{U}_S$ defines the uniform distribution over $[S]$. Let $D_1$ and $D_2$ be two distributions over $[n]$, the total variation distance is denoted by $d_{TV}(D_1,D_2)=\frac{1}{2} \sum_{i\in [n]} |D_1(x)-D_2(x)|$. Let $\mathcal{D}$ be the set of all probability distributions supported on $[n]$. A property $\mathcal{P}$ is a subset of $\mathcal{D}$. We say that a distribution $D$ is $\epsilon$ far from $\mathcal{P}$, if $D$ is $\epsilon$ far from all the distributions having the property $\mathcal{P}$. That is, $d_{TV}(D,D') > \epsilon$ for every $D' \in \mathcal{P}$.  A distribution $D$ over $[n]$ is \emph{monotone} (non-increasing) if its probability mass function satisfies $D(1) \geq D(2) \geq \dots \geq D(n)$. Let $D$ be a distribution over the domain $[n]$, and there exists a set of partitions of the domain into $\ell$ disjoint intervals, $\mathcal{I}=\{I_j\}_{j=1}^{\ell}$. The flattened distribution $(D^f)^{\mathcal{I}}$ corresponding to $D$ and $\mathcal{I}$ is a distribution over $[n]$ such that, for $j\in [\ell]$ and $i\in I_j$; $(D^f)^{\mathcal{I}}(i)=\frac{\sum_{t\in I_j}D(t)}{|I_j|}$.  Monotone distributions have a special property called \emph{oblivious partition} proposed by \cite{Birge} which depicts that any monotone distribution is close to its flattened distribution with respect to some oblivious partitions of intervals with exponential size. It can be defined formally as follows,
\begin{theorem}[Oblivious partitioning \cite{Birge}] \label{thm: berge_1}
Let $D$ be a non-increasing distribution over $[n]$ and $\mathcal{I}=\{I_1,...,I_{\ell}\}$ is an interval partitioning of $D$ such that $|I_j|=(1+\epsilon)^j$, for $0<\epsilon<1 $, then $D$ has the following properties,
\begin{itemize}
    \item $\ell=O(\frac{1}{\epsilon}\log{(n\epsilon)})$ 
    \item The flattened distribution corresponding to $\mathcal{I}$, $(D^f)^{\mathcal{I}}$ is close to $D$.
\end{itemize}
\end{theorem}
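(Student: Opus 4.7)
For the bound on $\ell$, the plan is a direct geometric-series calculation: since the intervals partition $[n]$, their sizes must sum to at least $n$, i.e.\ $\sum_{j=1}^{\ell}(1+\epsilon)^j \geq n$. The left-hand side equals $((1+\epsilon)^{\ell+1} - (1+\epsilon))/\epsilon$, so $(1+\epsilon)^{\ell+1} \geq n\epsilon$. Taking logarithms and using $\ln(1+\epsilon) \geq \epsilon/2$ for $\epsilon \in (0,1)$ immediately yields $\ell = O(\log(n\epsilon)/\epsilon)$.

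For the closeness claim I would control the total variation
\[
d_{TV}(D, (D^f)^{\mathcal{I}}) = \tfrac{1}{2}\sum_{j=1}^{\ell}\sum_{i \in I_j}|D(i) - \bar{D}_j|, \qquad \bar{D}_j := D(I_j)/|I_j|.
\]
Writing $I_j = \{a_j, \ldots, b_j\}$, monotonicity of $D$ on $I_j$ gives $D(a_j) \geq \bar{D}_j \geq D(b_j)$, so the inner sum over $I_j$ is at most $|I_j|(D(a_j) - D(b_j))$. The heart of the argument is a telescoping identity: using $D(b_j) \geq D(a_{j+1})$ (monotonicity across consecutive intervals) and Abel summation one obtains
\[
\sum_{j}|I_j|(D(a_j) - D(b_j)) \;\leq\; |I_1|D(a_1) + \sum_{j \geq 2}(|I_j| - |I_{j-1}|)D(a_j).
\]
Since $|I_j| - |I_{j-1}| = \epsilon |I_{j-1}|$ from the geometric scaling, and since $|I_{j-1}|D(a_j) \leq |I_{j-1}|D(b_{j-1}) \leq D(I_{j-1})$ (again by monotonicity, both within $I_{j-1}$ and at its right boundary), the cross sum is at most $|I_1|D(a_1) + \epsilon \sum_{j \geq 2}D(I_{j-1}) \leq |I_1|D(a_1) + \epsilon$.

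The main obstacle is the leading boundary term $|I_1|D(a_1)$, which can be a constant if the first interval is nontrivial and $D$ is spiked at $a_1$. The standard remedy is to round the Birge sizes to integers so that the first few intervals (those with nominal size at most $1$) collapse to singletons; on such singletons the flattened distribution agrees with $D$ pointwise and contributes $0$ to the TV distance. Starting the telescoping from the first interval of size at least $2$ removes the bad boundary term, and the calculation above then gives $d_{TV}(D, (D^f)^{\mathcal{I}}) = O(\epsilon)$, establishing the closeness part of the theorem.
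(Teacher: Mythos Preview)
The paper does not prove this theorem at all; it is stated in the preliminaries as a known result of Birg\'e and used as a black box throughout. There is therefore no in-paper proof to compare your proposal against.

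Your proposal is essentially the classical Birg\'e argument and is correct in spirit. One small slip in the first part: from $\sum_{j=1}^{\ell}(1+\epsilon)^j \geq n$ you conclude $(1+\epsilon)^{\ell+1} \geq n\epsilon$ and then ``$\ell = O(\log(n\epsilon)/\epsilon)$'', but that inequality only gives a \emph{lower} bound on $\ell$. For the upper bound you need the opposite direction, namely that the first $\ell-1$ intervals do not yet cover $[n]$, so $\sum_{j=1}^{\ell-1}(1+\epsilon)^j < n$, whence $(1+\epsilon)^{\ell} < n\epsilon + (1+\epsilon)$ and the claimed bound follows. The total-variation bound via the Abel-summation/telescoping trick, together with your observation that the initial intervals round down to singletons (contributing zero), is exactly the standard proof and is fine.
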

Similarly, when the distribution $D$ is non-decreasing, the oblivious partitioning can be defined as follows,
\begin{theorem}[Oblivious partitioning \cite{Birge}] \label{thm: berge_2}
Let $D$ be a non-decreasing distribution over $[n]$ and $\mathcal{I}=\{I_1,...,I_{\ell}\}$ is an interval partitioning of $D$ such that $|I_j|=(1+\epsilon)^j$, for $0<\epsilon<1 $, where $j$ is considered in decreasing order from $\ell,\ell-1...$ to $1$, then $D$ has the following properties,
\begin{itemize}
    \item $\ell=O(\frac{1}{\epsilon}\log{(n\epsilon)})$
    \item The flattened distribution corresponding to $\mathcal{I}$, $(D^f)^{\mathcal{I}}$ is close to $D$, i.e, $d_{TV}(D,(D^f)^{\mathcal{I}})\leq \epsilon$.
\end{itemize}
\end{theorem}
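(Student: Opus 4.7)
The plan is to reduce the non-decreasing case to Theorem~\ref{thm: berge_1} via the order-reversing bijection, using the fact that both ``monotone'' and ``close in total variation'' are preserved under relabeling the domain by a bijection. No new quantitative work is required; everything rides on the non-increasing version.

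First I would define the reflection $\sigma:[n]\to[n]$ by $\sigma(i)=n-i+1$ and let $D'$ be the pushforward distribution, $D'(i)=D(\sigma(i))$. Because $D$ is non-decreasing, $D'$ is non-increasing, so Theorem~\ref{thm: berge_1} applies to $D'$ and yields an interval partition $\mathcal{I}'=\{I'_1,\ldots,I'_{\ell}\}$ of $[n]$ with $|I'_j|=(1+\epsilon)^j$, listed left-to-right in increasing $j$, together with the bound $\ell=O(\tfrac{1}{\epsilon}\log(n\epsilon))$. Pull this partition back through $\sigma$ to obtain $\mathcal{I}=\{I_1,\ldots,I_\ell\}$ with $I_j=\sigma(I'_j)$. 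Since $\sigma$ reverses the natural order on $[n]$ and $\mathcal{I}'$ consists of consecutive intervals, $\mathcal{I}$ is again a partition into consecutive intervals, but now $I_\ell$ sits at the left end of $[n]$ and $I_1$ at the right end; the sizes are unchanged, so reading left-to-right we see $(1+\epsilon)^\ell,(1+\epsilon)^{\ell-1},\ldots,(1+\epsilon)^1$, matching the convention in the statement that $j$ decreases from $\ell$ to $1$. The bound on $\ell$ carries over unchanged.

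For the closeness claim, the key observation is that flattening commutes with relabeling by a bijection: for each $j$ and each $i\in I_j$,
\[
(D^f)^{\mathcal{I}}(i)=\frac{\sum_{t\in I_j} D(t)}{|I_j|}=\frac{\sum_{t\in I'_j} D'(t)}{|I'_j|}=(D'^f)^{\mathcal{I}'}(\sigma(i)),
\]
because $\sigma$ restricts to a bijection $I_j\to I'_j$ with $D\circ\sigma=D'$. Since total variation distance is invariant under relabeling the domain,
\[
d_{TV}\bigl(D,(D^f)^{\mathcal{I}}\bigr)=d_{TV}\bigl(D',(D'^f)^{\mathcal{I}'}\bigr)\le \epsilon,
\]
where the last inequality is the closeness conclusion of Theorem~\ref{thm: berge_1} (Birgé's theorem quantitatively gives the $\epsilon$ bound, even if only ``close'' is stated in the excerpt).

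The only point that requires a moment's care is bookkeeping of the interval ordering under $\sigma$; once one verifies that reflecting a left-to-right listing $I'_1,\ldots,I'_\ell$ yields the right-to-left listing $I_1,\ldots,I_\ell$ of $\mathcal{I}$, the rest is automatic. I do not expect any genuine obstacle: the whole argument is a change of variables that lets Theorem~\ref{thm: berge_1} do the work.
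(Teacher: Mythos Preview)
Your reduction via the reflection $\sigma(i)=n-i+1$ is correct and is the standard way to derive the non-decreasing case from the non-increasing one; the verification that flattening commutes with relabeling and that $d_{TV}$ is invariant under bijections is exactly what is needed. Note, however, that the paper does not supply its own proof of this statement: Theorem~\ref{thm: berge_2} is quoted from \cite{Birge} as a known result, just like Theorem~\ref{thm: berge_1}, so there is no in-paper argument to compare against. Your approach is precisely the natural symmetry argument one would give to deduce Theorem~\ref{thm: berge_2} from Theorem~\ref{thm: berge_1}.
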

The following inequality will be used in the analysis of the Algorithm \ref{alg:summono}.
\begin{lemma}[Hoeffding's inequality] \label{lem: hoeffding}
Let $X_1,...,X_m\in[a_i,b_i]$ be independent random variables, and $X=(X_1+X_2+...+X_m)/m$. Then $Pr[|X-\mathbb{E}[X]|\geq t]\leq 2exp(\frac{-2(mt)^2}{\sum_{i}(b_i-a_i)^2})$.
\end{lemma}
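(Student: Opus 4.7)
The plan is to prove this by the standard Chernoff--Cram\'er method, reducing to the centered sum $S_m = \sum_{i=1}^m (X_i - \mathbb{E}[X_i])$ and noting that $|X - \mathbb{E}[X]| = |S_m|/m$, so the event $\{|X - \mathbb{E}[X]| \geq t\}$ coincides with $\{|S_m| \geq mt\}$. It suffices to bound the upper tail $\Pr[S_m \geq mt]$; the lower tail follows by applying the same argument to $-X_i \in [-b_i, -a_i]$, and a union bound contributes the factor of $2$ in the conclusion.

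First I would apply Markov's inequality in exponential form: for any parameter $s > 0$,
\[
\Pr[S_m \geq mt] \leq e^{-smt} \, \mathbb{E}[e^{s S_m}] = e^{-smt} \prod_{i=1}^m \mathbb{E}\bigl[e^{s(X_i - \mathbb{E}[X_i])}\bigr],
\]
where the final equality uses independence of the $X_i$ to factor the moment generating function of the sum across the summands.

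The key technical ingredient is Hoeffding's lemma: if $Y \in [a,b]$ with $\mathbb{E}[Y] = 0$, then $\mathbb{E}[e^{sY}] \leq \exp(s^2(b-a)^2/8)$. I would establish this by convexity, bounding $e^{sy} \leq \tfrac{b-y}{b-a} e^{sa} + \tfrac{y-a}{b-a} e^{sb}$ for $y \in [a,b]$ and taking expectations to write $\mathbb{E}[e^{sY}] \leq e^{\phi(s)}$ for an auxiliary function $\phi$; a direct computation shows $\phi(0) = \phi'(0) = 0$ and $\phi''(s) \leq (b-a)^2/4$, so Taylor's theorem yields the stated bound. Applying this lemma to each centered variable $X_i - \mathbb{E}[X_i]$, which lies in an interval of length $b_i - a_i$, produces
\[
\Pr[S_m \geq mt] \leq \exp\Bigl(-smt + \tfrac{s^2}{8} \sum_{i=1}^m (b_i - a_i)^2\Bigr).
\]

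Finally, I would optimize over $s > 0$ by choosing $s = 4mt / \sum_i (b_i - a_i)^2$, which minimizes the quadratic in the exponent and yields $\exp(-2(mt)^2 / \sum_i (b_i - a_i)^2)$ on the upper tail. The main obstacle is Hoeffding's lemma itself, since it requires a careful analysis of the auxiliary function's second derivative; once that MGF estimate for a single bounded variable is in hand, the remaining steps (Markov, factorization via independence, and the one-parameter optimization) are essentially mechanical.
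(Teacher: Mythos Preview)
Your argument is the standard Chernoff--Cram\'er derivation of Hoeffding's inequality and is correct. The paper does not give its own proof of this lemma; it merely states it as a classical preliminary and uses it later, so there is nothing to compare against.
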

\section{Estimating Sum for Monotonically Non-increasing universe using Conditional Samples}
Given an universe of size $[n]$, where each element is associated with a weight $w(i)$ and the weights are monotonically non-increasing, the goal is to estimate the weight $W=\sum_{i=1}^nw(i)$ such that the estimated weight $(1-\epsilon)W\leq \hat{W}\leq (1+\epsilon)W$. We use the concept of conditional sampling in our approach. In particular, given a subset $S\subseteq [n]$, when queried a pair $(i,w(i))$ will be returned with probability $\frac{w(i)}{W(S)}$, where $W(S)$ defines the total weight of the set $S$. Additionally, we use uniform conditional samples which returns a pair $(i,w(i))$ with probability $\frac{1}{|S|}$. For a subset $S\subseteq [n]$, we can consider a conditional distribution $D_S$ over $[S]$, such that $D_S(i)=\frac{w(i)}{W(S)}$. We start with an algorithm for testing whether $D_S$ (over $[S]$) is uniform or $\epsilon$-far from uniformity using conditional sample on $S$.

\begin{algorithm}[hbt!]
\caption{Testing Uniformity}
\label{alg:uniformity}
\KwIn{Sample access to $D_S$, where $S\subseteq [n]$, error parameter $\epsilon\in(0,1)$,}
\KwOut{Decision (\accept/\reject)}
Sample $T_1 = O\left(\frac{1}{\epsilon}\log{(1/\epsilon)}\right)$ pairs $(i, w(i))$ from $D_S$\;
Sample $T_2 = O\left(\frac{1}{\epsilon}\log{(1/\epsilon)}\right)$ pairs $(j, w(j))$ uniformly at random from $S$\;
\For{each pair $(i, j)$, where $i \in T_1$ and $j \in T_2$}{
    \If{$\frac{w(i)}{w(j)} > \left(1 + \frac{\epsilon}{2}\right)$}{
        \reject;
        \textbf{EXIT}\;
    }
}
\accept;

\end{algorithm}

\begin{theorem}
    The algorithm Testing Uniformity samples $O(\frac{1}{\epsilon}\log{(1/\epsilon)})$ points from $D_S$ and $O(\frac{1}{\epsilon}\log{(1/\epsilon)})$ points uniformly from $S$ and does the following,
    \begin{itemize}
        \item Returns \accept with probability at least $2/3$ if $D_S$ is uniform over $S$
        \item Returns \reject with probability at least $2/3$ if $d_{TV}(D_S,\mathcal{U}_S)>\epsilon$
    \end{itemize}
\end{theorem}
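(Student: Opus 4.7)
The plan is to handle completeness and soundness separately. Completeness is immediate: if $D_S = \mathcal{U}_S$, then $w(i) = W(S)/|S|$ for every $i \in S$, so every pair $(i,j)$ returned by the two oracles satisfies $w(i)/w(j) = 1$. The rejection condition is never triggered and the algorithm accepts with probability $1$.

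For soundness, assume $d_{TV}(D_S, \mathcal{U}_S) > \epsilon$. The approach is to identify a ``heavy'' set and a ``light'' set whose intersection with the two sample pools produces a rejection witness. Concretely, set
\[
H = \{i \in S : D_S(i) > (1+\epsilon/2)/|S|\}, \qquad L = \{i \in S : D_S(i) < 1/((1+\epsilon/2)|S|)\},
\]
and establish the structural claim $D_S(H) > \epsilon/2$ and $|L|/|S| > \epsilon/2$. The argument for this claim uses the identity
\[
d_{TV}(D_S, \mathcal{U}_S) = \sum_{i : D_S(i) > 1/|S|} (D_S(i) - 1/|S|),
\]
split over $H$ and the moderate band $M^+ = \{i : 1/|S| < D_S(i) \leq (1+\epsilon/2)/|S|\}$. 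Each summand on $M^+$ is at most $(\epsilon/2)/|S|$, so the total $M^+$-contribution is at most $\epsilon/2$, forcing the $H$-contribution to exceed $\epsilon/2$, which in turn gives $D_S(H) > \epsilon/2$. A symmetric decomposition on the below-average side of the TV identity yields $|L|/|S| > \epsilon/2$.

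With these bounds available, the final step is a standard hitting-set argument. For any $i \in H$ and $j \in L$,
\[
\frac{w(i)}{w(j)} = \frac{D_S(i)}{D_S(j)} > (1+\epsilon/2)^2 > 1 + \epsilon/2,
\]
so the algorithm rejects whenever at least one sampled pair lies in $H \times L$. With $T_1 = T_2 = O((1/\epsilon)\log(1/\epsilon))$, the probability that none of the weighted samples falls in $H$ is at most $(1-\epsilon/2)^{T_1} \leq 1/6$, and likewise for the uniform samples missing $L$. A union bound then gives rejection probability at least $2/3$.

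The main obstacle I anticipate is the structural claim itself: converting the purely additive guarantee of TV-farness into a statement that $\Omega(\epsilon)$ mass sits on elements whose weight deviates \emph{multiplicatively} from $1/|S|$ by a factor of at least $1+\epsilon/2$. The subtle point is that the entire TV budget could in principle be absorbed by elements only marginally above $1/|S|$, and this possibility must be quarantined via the moderate-band accounting above before any mass can be attributed to $H$. Once this is settled, the sampling concentration is routine.
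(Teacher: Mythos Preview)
Your proposal is correct and follows essentially the same approach as the paper's proof. Both arguments define a heavy set (large $D_S$-mass) and a light set (large uniform fraction), show that a sample from each forces the ratio $w(i)/w(j)$ above $1+\epsilon/2$, and finish with the same hitting-set calculation; the only cosmetic difference is that you use multiplicative thresholds $(1+\epsilon/2)^{\pm 1}/|S|$ whereas the paper uses additive thresholds $1/|S| \pm \epsilon/(4|S|)$, and you spell out the moderate-band accounting that the paper merely asserts.
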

\begin{proof}
    The proof is similar to the analysis of the uniformity testing algorithm presented by \cite{CRS15}. The authors used the \pcond queries to distinguish a uniform distribution from far from uniformity. Unlike them, we use the weight of each point $w(i)$ for testing uniformity as our query model returns a pair $(i,w(i))$. Let $D_S$ is uniform over $[S]$, then for every pair $(i,w(i))$ and $(j,w(j))$, $\frac{w(i)}{w(j)}=1$. Hence the algorithm will not output \reject inside the for loop.

Let $d_{TV}(D_S,\mathcal{U}_S)>\epsilon$. We start by defining the following sets,

$H=\Big \lbrace h \in [S] | D_S(h)\geq \frac{1}{S} \Big\rbrace$ and
$L=\Big\lbrace l \in [S] | D_S(l) < \frac{1}{S} \Big\rbrace$.
It is easy to see that when $D_S$ is $\epsilon$-far from uniformity, $\sum_{h\in H}(D(h)-\frac{1}{S})=\sum_{l\in L}(\frac{1}{S}-D(l))>\epsilon/2$. Now, we have the following observations:
\begin{itemize}
    \item Let $L'\subset L$ such that $L'=\Big\lbrace l' \in L | D_S(l') < \frac{1}{S}- \frac{\epsilon}{4S}\Big\rbrace$. Then $|L'|>(\epsilon/4)S$. While sampling $O(\frac{1}{\epsilon}\log{(1/\epsilon)})$ points uniformly from $S$, the probability that no point comes from $L'$ is at most $(1-\epsilon/4)^{O(1/\epsilon\log{(1/\epsilon)})}<\epsilon/100$. Applying union bound over all $O(1/\epsilon\log{(1/\epsilon)})$ points, with probability at least $9/10$, at least one point will come from the set $L'$. 
    \item Let $H'\subset H$ such that $H'=\Big\lbrace h' \in H | D_S(h')> \frac{1}{S}+\frac{\epsilon}{4S}\Big \rbrace$. Then $D_S(H')>\epsilon/4$. While sampling $O(\frac{1}{\epsilon}\log{(1/\epsilon)})$ points from $D_S$, using similar argument as that of the previous case, we argue that at least one point will occur from the set $H'$.
\end{itemize}
Let $i\in H'$ and $j\in L'$. Then $\frac{D_S(i)}{D_S(j)}>\frac{(1+\epsilon/4)}{(1-\epsilon/4)}>(1+\epsilon/2)$. Observe that $D_S(i)=\frac{w(i)}{W(S)}$ and $D_S(j)=\frac{w(j)}{W(S)}$. Therefore, when $D_S$ is far from uniformity, there exists at least a pair $(i,j)$ such that $\frac{w(i)}{w(j)}>(1+\epsilon/2)$ and the algorithm returns \reject in this case.
\end{proof}

As the elements of the universe are monotonically decreasing, we now connect it with monotone distributions. 

Let $D$ be monotone. By oblivious partitioning (Theorem \ref{thm: berge_1}) with parameter $\epsilon_1=\epsilon(1-\delta)$, for $\delta=\sqrt{(1-\epsilon)}$, we have $\sum_{j=1}^{\ell} \sum_{x\in I_j}|D(x)-\frac{D(I_j)}{|I_j|}|\leq \epsilon_1$. Or, $\sum_{j=1}^{\ell}D(I_j)\sum_{x\in I_j}|\frac{D(x)}{D(I_j)}-\frac{1}{|I_j|}|=\sum_{j=1}^{\ell} D(I_j) d_{TV}(D_{I_j},\mathcal{U}_{I_j})\leq \epsilon_1$. Consider $D$ be a distribution over $[n]$, where $D(x)=\frac{w(x)}{W}$, and $D_{I_j}$ is a conditional distribution over $I_j$ such that $D_{I_j}(x)=\frac{w(x)}{W(I_j)}$. Also, observe that $D(I_j)=\frac{W(I_j)}{W}$. Let $J$ be the set of intervals where for all $I_j$, $d_{TV}(D_{I_j},\mathcal{U}_{I_j})>\epsilon$, then $\sum_{I_j\in J} D(I_j) \leq (1-\delta)$. Consider $J'$ be the set of intervals where $D_{I_j}$ is not $\epsilon$-far from uniformity. Then $\sum_{I_j\in J'}D(I_j)>\delta$. In other words, we can consider $J'$ to be a large bucket where each point $i\in \{[n]\setminus J\}$. We also argue that when $D$ is monotone, $D(J')>\delta$. The core idea of our algorithm is to estimate the total weight of the elements that lie in the large bucket $J'$. The total weight of the points which lie in the intervals that are $\epsilon$-far from uniformity is at most $(1-\delta)$. Therefore, estimating the total weight of the set $J'$ will suffice to estimate the total weight of all elements upto $(1\pm \epsilon)$ multiplicative error. We present our algorithm with respect to a monotonically non-increasing universe over $[n]$ where for each $i\in[n]$, $D(i)=\frac{w(i)}{W}$ and $D_S(i)=\frac{w(i)}{W(S)}$.

\begin{algorithm}[hbt!]
    \caption{Estimate Sum Monotone}
    \label{alg:summono}
    \KwIn{Weighted conditioning sample access (over a set $S$) to a monotonically non-increasing universe, set of oblivious partitions $\{I_1, \dots, I_{\ell}\}$, for $\ell = O\left(\frac{1}{\epsilon_1}\log{n}\right)$ where $\epsilon_1 = \epsilon(1-\delta)$ for $\delta = \sqrt{(1-\epsilon)}$}
    \KwOut{Estimate $\hat{W}$ for the sum}
    
    $J = \emptyset$\;
    
    \For{each interval $I_j$}{
        Run Algorithm~\ref{alg:uniformity}\;
        \If{Output is \reject}{
            Add $I_j$ to $J$\;
        }
    }
    
    Sample $T = O\left(\frac{1}{\epsilon^6}\right)$ pairs $(i, w(i))$ (over $S = n$) from the universe\;
    
    \For{each $i \in T$}{
        $X_i = w(i)$; when $i \notin J$
    }
    
    Return $\hat{W} = \frac{1}{T} \sum_{i=1}^{T} X_i$\;
    \If{$J=\emptyset$}{
    Check \eIf{$w(1)=w(n)$}{Return $W=n\cdot w(1)$}
    {Sample a pair $(i,w(i))$ from each interval $I_j$ and return $W=\sum_{ j\in[\ell]}|I_j|\cdot w(i)$}
    
    }
    
\end{algorithm}
\begin{theorem}\label{thm: sum mono}
    The algorithm Estimate Sum Monotone uses total $O(\frac{1}{\epsilon^3}\log{n}+\frac{1}{\epsilon^6})$ conditional samples and returns an estimate $\hat{W}$ such that $(1-2\epsilon)W\leq \hat{W}\leq (1-\epsilon)W$.  
\end{theorem}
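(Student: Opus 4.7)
The argument splits naturally into a correctness analysis and a sample-complexity analysis. For correctness, the first step is to invoke Theorem \ref{thm: berge_1} with parameter $\epsilon_1 = \epsilon(1-\delta)$ where $\delta = \sqrt{1-\epsilon}$, and conclude that the set $J^\star$ of intervals on which $d_{TV}(D_{I_j}, \mathcal{U}_{I_j}) > \epsilon$ satisfies $\sum_{I_j \in J^\star} D(I_j) \leq 1-\delta$, so the complementary ``good'' intervals $J' = \mathcal{I} \setminus J^\star$ account for weight at least $\delta W$. This is the key structural lemma: anything we miss by discarding the bad intervals contributes at most $(1-\delta)W = O(\epsilon)W$ to the total.

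Next, I would argue that Algorithm~\ref{alg:uniformity}, when applied to each of the $\ell = O(\log n / \epsilon)$ intervals, correctly identifies $J$ versus $J'$ with high probability. Since the basic uniformity test has failure probability $1/3$, I would boost the success probability per interval to $1 - 1/(10\ell)$ with $O(\log \ell)$ independent repetitions (using a majority vote) and then apply a union bound over the $\ell$ intervals. This step uses $O(\ell \cdot \frac{1}{\epsilon}\log\frac{1}{\epsilon} \cdot \log \ell) = O(\frac{1}{\epsilon^3}\log n)$ weighted and uniform conditional samples in total (after absorbing $\log$ factors), accounting for the first term in the sample-complexity bound.

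The next step is to show that within each surviving interval $I_j \notin J$, the approximate uniformity of $D_{I_j}$ implies the weights $w(i)$ are nearly constant: apart from an $O(\epsilon)$ fraction of $I_j$, every $w(i)$ lies within $(1 \pm O(\epsilon))$ of $W(I_j)/|I_j|$. Consequently, when a single pair $(i, w(i))$ sampled from $I_j$ is used to form $|I_j| \cdot w(i)$, one obtains a $(1 \pm O(\epsilon))$-approximation of $W(I_j)$; summing across all good intervals yields a $(1 \pm O(\epsilon))$-approximation of $W(J')$, which by the Birge bound is within $O(\epsilon)W$ of $W$. For the $T = O(1/\epsilon^6)$ weighted-sample aggregate estimator, I would apply Hoeffding's inequality (Lemma~\ref{lem: hoeffding}) using the fact that each $X_i$ is bounded, with $T$ chosen large enough that the additive deviation is at most $\epsilon$ times the true quantity being estimated; the exponent $1/\epsilon^6$ arises because the range of the per-sample random variable scales like $\mathrm{poly}(1/\epsilon)$ relative to its mean.

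The main obstacle I anticipate is in the second step: controlling the ``borderline'' intervals whose $d_{TV}$ distance is between $\epsilon$ and $\epsilon_1$, since the uniformity tester makes no guarantee on those. I would handle this by observing that even if such an interval is misclassified into $J$ (and thus discarded), it is automatically absorbed into the $(1-\delta)W$ slack provided by Birge's theorem; and if it is misclassified into the complement of $J$, the within-interval near-uniformity is still good enough ($O(\epsilon_1)$ in $d_{TV}$) to keep the per-interval estimate accurate. Stitching these two observations together, together with the monotonicity hypothesis, is what yields the claimed multiplicative bound on $\hat{W}$.
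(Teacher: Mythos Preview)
Your first two steps (Birg\'e decomposition to bound the total mass on ``far-from-uniform'' intervals by $1-\delta$, and union-bounding the per-interval uniformity tests) track the paper's setup. The divergence is in how the final estimator is analyzed.

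Your step 3 is not what Algorithm~\ref{alg:summono} does, and consequently not what the paper proves. The main branch of the algorithm never exploits near-uniformity \emph{within} a good interval, and the paper's proof never argues that $|I_j|\cdot w(i)\approx W(I_j)$. Instead, after identifying $J$, the algorithm draws $T$ weighted samples from the \emph{whole} domain, sets $X_i=w(i)\cdot\mathbb{1}[i\notin J]$, and outputs the empirical mean. The paper's argument is then purely aggregate: it computes $\mathbb{E}[\hat W]=W^2(J')/W$, applies Hoeffding with the crude range bound $\max_i w(i)-\min_i w(i)<W$, and uses only the single inequality $W(J')/W>\delta=\sqrt{1-\epsilon}$ to turn the concentration bound into $(1-\epsilon)\,W^2(J')/W\le \hat W\le (1+\epsilon)\,W^2(J')/W$. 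Finally, $\delta^2 W^2\le W^2(J')\le W^2$ gives $(1-\epsilon)^2 W\le \hat W\le (1+\epsilon)W$, hence the claimed $(1-2\epsilon)W$ lower bound. The $1/\epsilon^6$ is not coming from a per-sample range that ``scales like $\mathrm{poly}(1/\epsilon)$ relative to its mean'' as you suggest; it comes from the ratio $(W(J')/W)^4$ appearing in the Hoeffding exponent after the crude $W$ range bound.

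So your per-interval analysis is an alternative route (and indeed matches the $J=\emptyset$ fallback branch of the algorithm), but it is not the paper's proof of the main branch. If you want to match the paper, drop step~3 entirely, compute the expectation of the actual estimator $\frac{1}{T}\sum w(i)\mathbb{1}[i\in J']$ under weighted sampling, and run Hoeffding directly on that with the bounds above. Your observation about boosting the tester's confidence across $\ell$ intervals is a valid point that the paper glosses over.
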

\begin{proof}
    Observe that the set $J$ consists of a set of intervals where the conditional distribution $D_{I_j}$ is $\epsilon$-far from uniform distribution over $I_j$. As the universe is monotone by the oblivious decomposition, we argue that $\sum_{I_j\in J} D(I_j) \leq (1-\delta)$. Considering $J'$ to be the set of intervals containing points from $[n]\setminus J$, $\sum_{I_j\in J'} D(I_j) >\delta$. Now, while sampling $O(\frac{1}{\epsilon^4})$ weighted samples from $[n]$, probability that no point comes from the set $J'$ is at most $(1-\delta)^{1/\epsilon^4}$. By applying a union bound over all samples, with high probability at least one point will come come the set $J'$. Now, we calculate the expectation of $\hat{W}$,
      $  \mathbb{E}[\hat{W}]=\sum_{i\in J'}\frac{W(J')}{W}\cdot w(i)
        =\frac{W^2(J')}{W}$.
\end{proof}
Now, we apply Hoeffding inequality (Lemma \ref{lem: hoeffding}) to bound the expectation. The random variables $X_i$ takes the value in the range $[\text{min} \ w(i), \text{max} \ w(i)]$. Therefore, we have
\begin{align*}
    Pr\big[ |\hat{W}-\frac{W^2(J')}{W}|>\epsilon\cdot \frac{W^2(J')}{W}\big]&\leq e^{-\frac{\epsilon^2W^4(J')T}{W^2\cdot[\text{max} \ w(i)- \text{min} \ w(i)]^2}}
\end{align*}
We know $\sum_{I_j\in J'}D(I_j)=D(J')>\delta$. In other words, $D(J')=\frac{W(J')}{W}>\delta$. Also, $[\text{max} \ w(i)- \text{min} \ w(i)]<W$. Together with the facts, we get that for $T=O(\frac{1}{\epsilon^6})$, with high probability $(1-\epsilon)\frac{W^2(J')}{W}\leq \hat{W}\leq (1+\epsilon)\frac{W^2(J')}{W}$. Now, $\delta^2W^2<W^2(J')<W^2$. For $\delta=\sqrt{(1-\epsilon)}$, we conclude that $(1-\epsilon)^2W<\hat{W}<(1+\epsilon)W$. Considering $(1-\epsilon)^2>(1-2\epsilon)$, finally we prove that $(1-2\epsilon)W<\hat{W}<(1+\epsilon)W$.
\begin{remark}
    Let the universe consists of the elements of all equal weights. In that case the distribution corresponding to the weights will be uniform distribution. If we use our algorithm for uniform distribution, while running the Algorithm \ref{alg:uniformity}, none of the intervals output \reject and the set $J$ will be empty. To get rid of such situation, we can make two queries, one for $w(1)$ and the other for $w(n)$. Given the promise that the elements are monotonically non-increasing, if $w(1)=w(n)$, we argue that $w(1)=w(2)=...=w(n)$. Therefore, in this case, we return the weight $W=n\cdot w(1)$. Similarly, another case would be that the set $J$ remains empty at the end of the algorithm but $w(1)\neq w(n)$. This indicates that the conditional distribution over each interval is uniform over $I_j$. In this case, we sample any point $(i,w(i))$ from each interval and return $W=\sum_{ j\in[\ell]}|I_j|\cdot w(i)$. 
\end{remark}

\subsection{A Special Case of Unimodal Distribution}
Let the weights of the elements of the universe be monotonically decreasing till a point $j$ and then it monotonically increasing till $n$. In other words the distribution corresponding to the weights is unimodal.  Unimodal distribution can be thought of as union of two monotone distributions where one is monotonically decreasing and the other is monotonically increasing. Let the first half of the distribution is monotonically decreasing. If we could know the lowest point of the entire distribution, we will be sure that from thereon the next half of the distribution is monotonically increasing. Then apply our algorithm for two monotone distributions separately. Let $\hat{W}_1$, and $\hat{W}_2$ be the estimated sum of the two halves of the distribution respectively. Then the estimated sum is $\hat{W}_1+\hat{W}_2$.

To find the point with the lowest weight, we use the concept of binary search. We start with querying the weight of the middle element $w(n/2)$ and compare it with $w(n/2-1)$ and $w(n/2+1)$. If $w(n/2-1)<w(n/2)<w(n/2+1)$, then we are in the increasing part. Therefore the lowest point must be on the interval $[1,n/2-1]$, we remove the interval $[n/2,n]$ and recurse on the left hand side. Similarly, if $w(n/2-1)>w(n/2)>w(n/2+1)$ then we are still on the decreasing part and the minimum point must be sitting on the right hand side and we recurse on the right side. Repeat this process until there are constant number of elements in an interval. Then query all this points and find the minimum point. This procedure requires $O(\log{n})$ additional queries. We argue that we do not sample anything from $U$ in this process, rather we ask the the weight $w(i)$ for an input $i$. This queries are identical to making \eval queries which is less powerful than \emph{dual model}. Therefore, we argue that our procedure needs $O(\log{n})$ additional \eval queries which are adaptive in nature.


Once we find the lowest point, we use the Algorithm \ref{alg:summono} for the two halves of the universe separately. As the first half of the distribution is decreasing, we use the oblivious decomposition according to the Theorem \ref{thm: berge_1}, and for the second half of the distribution, we use the Theorem \ref{thm: berge_2}. Let $W_1$ and $W_2$ be the weight of the first and second half of the universe respectively and $\hat{W}_1$ and $\hat{W}_2$ are the corresponding estimated weights. Applying the Theorem \ref{thm: sum mono}, we argue that $(1-2\epsilon)W_1\leq \hat{W}_1\leq (1-\epsilon)W_1$ and $(1-2\epsilon)W_2\leq \hat{W}_2\leq (1-\epsilon)W_2$. Hence, $(1-2\epsilon)W\leq \hat{W}_1+\hat{W}_2\leq (1-\epsilon)W$. We formalize the theorem below,

\begin{theorem}\label{thm: sum unimodal}
   Let there is a universe of $[n]$ numbers where the weights of the points are decreasing up to a point and then increasing till $[n]$. There exists an algorithm that uses total $O(\frac{1}{\epsilon^3}\log{n}+\frac{1}{\epsilon^6})$ non-adaptive conditional samples and returns an estimate $\hat{W}$ such that $(1-2\epsilon)W\leq \hat{W}\leq (1-\epsilon)W$. The algorithm additionally requires $O(\log{n})$ adaptive evaluation queries to find the index of the minimum weighted element. 
\end{theorem}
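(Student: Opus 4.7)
The plan is to reduce the unimodal case to two invocations of the monotone sum estimation algorithm (Algorithm~\ref{alg:summono}), after first locating the minimum-weight index $j^\star$ via a binary search. The key structural observation is that a unimodal sequence with a single valley splits at $j^\star$ into a monotonically non-increasing prefix on $[1, j^\star]$ and a monotonically non-decreasing suffix on $[j^\star, n]$; once we know $j^\star$, we can treat each piece as an independent instance of the monotone sum estimation problem.

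First I would justify the binary search for $j^\star$. Querying $w(m)$, $w(m-1)$, $w(m+1)$ for the midpoint $m$ of the current active interval and comparing values tells us which side of the valley $m$ lies on: if $w(m-1) > w(m) > w(m+1)$ we recurse on $[m+1, \text{right}]$, while if $w(m-1) < w(m) < w(m+1)$ we recurse on $[\text{left}, m-1]$. Since the weights are strictly unimodal up to constants, halving the interval at each step produces an $O(\log n)$ recursion depth; each level uses $O(1)$ \eval queries, giving $O(\log n)$ adaptive \eval queries in total. Crucially, these queries return only $w(i)$ for chosen $i$ and are not conditional samples, so they do not contribute to the conditional-sample complexity.

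Next I would run Algorithm~\ref{alg:summono} twice, once on $[1, j^\star]$ and once on $[j^\star, n]$, each with parameter $\epsilon$. For the left half I invoke the oblivious partitioning of Theorem~\ref{thm: berge_1} (non-increasing case), and for the right half I invoke Theorem~\ref{thm: berge_2} (non-decreasing case) which produces partitions of geometrically increasing lengths indexed from $\ell, \ell-1, \dots, 1$. Because each half is monotone under its own partition, Theorem~\ref{thm: sum mono} applies to each invocation, yielding estimates $\hat{W}_1$ and $\hat{W}_2$ satisfying $(1-2\epsilon)W_1 \le \hat{W}_1 \le (1+\epsilon)W_1$ and $(1-2\epsilon)W_2 \le \hat{W}_2 \le (1+\epsilon)W_2$ with high probability. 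Summing and using $W = W_1 + W_2$ yields $(1-2\epsilon)W \le \hat{W}_1 + \hat{W}_2 \le (1+\epsilon)W$, which is the desired approximation.

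Finally I would tally the complexity: each of the two monotone invocations uses $O(\tfrac{1}{\epsilon^3}\log n + \tfrac{1}{\epsilon^6})$ non-adaptive conditional samples (the $\log n$ factor from the $\ell = O(\tfrac{1}{\epsilon}\log n)$ oblivious intervals, each needing the uniformity test of cost $O(\tfrac{1}{\epsilon}\log(1/\epsilon))$ plus a final $O(1/\epsilon^6)$ weighted samples), and the binary search adds $O(\log n)$ adaptive \eval queries, matching the theorem statement. The only nontrivial step is the binary search correctness: the potential obstacle is that if $w$ has long plateaus or adjacent equal values near the valley, the strict inequalities $w(m-1) > w(m) > w(m+1)$ may fail to resolve the direction; this is handled by terminating the recursion once the active interval has constant size and scanning it directly with $O(1)$ additional \eval queries, so the $O(\log n)$ bound is preserved and the identified $j^\star$ is a valid global minimizer for the algorithm's downstream splitting.
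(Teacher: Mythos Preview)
Your proposal is correct and mirrors the paper's own argument essentially step for step: locate the valley $j^\star$ by binary search using $O(\log n)$ adaptive \eval queries (terminating on a constant-size interval to handle ties), split the universe at $j^\star$ into a non-increasing prefix and a non-decreasing suffix, apply Algorithm~\ref{alg:summono} to each half with the oblivious partitions of Theorems~\ref{thm: berge_1} and~\ref{thm: berge_2} respectively, and add the two estimates. The only discrepancy is cosmetic: you state the upper bound as $(1+\epsilon)W$, which agrees with the derivation inside the proof of Theorem~\ref{thm: sum mono}, whereas the paper's theorem statements write $(1-\epsilon)W$ on the right; this is a typo in the paper rather than a divergence in your argument.
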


\subsection{Discussion on Lower Bound}
    We observe that Acharya et al. \cite{ACK14} proves an $\Omega(\frac{\log{n}}{\log{\log{n}}})$ lower bound for estimating support size up to a factor $\log{n}$ in the conditional sampling model. In particular, they considered non-adaptive conditional queries in this scenario. Later, in a subsequent work Acharya et al. \cite{ACK15} use the same techniques and adapt their argument for a smaller (constant) value of the approximation factor. They claim the following,
    \begin{theorem}\cite{ACK15}
        Any non-adaptive algorithm for estimating $W_S$
up to a factor $2$ requires at least $\Omega(\log{n})$ queries.
    \end{theorem}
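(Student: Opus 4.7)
My approach is to apply Yao's minimax principle and exhibit a distribution over input instances on which every deterministic non-adaptive algorithm making $q = o(\log n)$ conditional queries fails with constant probability, while the two halves of the support of this distribution have total weights differing by a factor of at least $2$. The construction adapts the hidden-level construction from Acharya et al.\ \cite{ACK14}, originally used for the support-size lower bound $\Omega(\log n / \log \log n)$, and tightens the analysis to the weaker $2$-approximation regime for sum estimation.

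For the construction, set $L = \lfloor \log n \rfloor$ and partition $[n]$ into geometrically sized intervals $I_1, \ldots, I_L$ with $|I_k| \asymp 2^{k-1}$. Two families of instances are then defined. In family $\mathcal{F}_A$, the weights are balanced, with $w(i) = 2^{-(k-1)}$ for $i \in I_k$, so every interval contributes unit mass and $W_A = L$. In family $\mathcal{F}_B$, a secret level $k^\ast \in [L]$ is drawn uniformly at random and the weights in $I_{k^\ast}$ are rescaled so that $I_{k^\ast}$ alone contributes total mass at least $L$, giving $W_B \geq 2 W_A$. The factor-$2$ separation in total weight is immediate by construction, and the difficulty of distinguishing the two families will imply the lower bound.

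For the indistinguishability analysis, consider any non-adaptive query set $S_j$ chosen before observing any answer. The conditional sample $(i, w(i)) \sim D_{S_j}$ carries information about $\mathcal{F}_B$ only when $S_j \cap I_{k^\ast}$ captures a noticeable fraction of the weight of $S_j$. Since $k^\ast$ is uniform over $[L]$ and the total mass outside $I_{k^\ast}$ is identical under both families, a direct calculation shows that for any fixed $S_j$ the total variation distance between the sample distribution under $\mathcal{F}_A$ and the mixture under $\mathcal{F}_B$ is $O(1/L)$. A hybrid argument, combined with a Poissonization of the sample counts to decouple dependence across queries, then yields an overall transcript TV distance of $O(q/L)$, which is $o(1)$ whenever $q = o(\log n)$. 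No distinguisher can succeed in this regime, so no $(1 \pm 1/2)$-approximator for $W$ can exist.

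The main obstacle is that conditional queries are strong enough that a single well-chosen subset $S_j = I_k$ immediately reveals whether $I_k$ is heavy, so the per-query TV bound must genuinely exploit the randomization over $k^\ast$. Randomizing $k^\ast$ forces any such targeted query to succeed with probability only $1/L$, but the argument must then ensure that the union of $q$ queries still fails to cover all levels with constant probability, and must also handle queries whose intersections with several intervals provide partial information. Formalising this via a covering-style bound, while keeping the Poissonized TV estimates tight across the hybrid, is where the main technical effort lies; once that is done, the final union bound closes the argument at the optimal $\Omega(\log n)$ rate for factor-$2$ approximation.
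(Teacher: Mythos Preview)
The paper does not prove this statement; it merely cites it from \cite{ACK15} and remarks that \cite{ACK15} adapts the support-size lower bound of \cite{ACK14} to the constant-factor regime. So there is no ``paper's proof'' to compare against beyond that one-line pointer.

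As for the proposal itself, there is a genuine gap. In the model under discussion each query returns the pair $(i,w(i))$, not just the index $i$. Your family $\mathcal{F}_B$ rescales the weights in the secret level $I_{k^\ast}$ by a factor of roughly $L$, so any sample landing in $I_{k^\ast}$ comes back with a weight of the form $L\cdot 2^{-(k^\ast-1)}$, which does not appear anywhere in $\mathcal{F}_A$. Take the single non-adaptive query $S=[n]$. Under $\mathcal{F}_A$ each level is hit with probability $1/L$ and the returned weight is always $2^{-(k-1)}$. Under $\mathcal{F}_B$ with fixed $k^\ast$, level $k^\ast$ carries mass $L$ out of $2L-1$, so it is sampled with probability $L/(2L-1)$; averaging over the uniform choice of $k^\ast$, the returned weight is a ``rescaled'' value with probability exactly $L/(2L-1)\approx 1/2$. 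A short computation gives
\[
d_{\mathrm{TV}}\bigl(\text{output under }\mathcal{F}_A,\ \text{output under mixture }\mathcal{F}_B\bigr)\;=\;\frac{L}{2L-1}\;=\;\Theta(1),
\]
not $O(1/L)$. One query already separates the two families with constant advantage, and the hybrid argument collapses.

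The intuition you describe---that a targeted query $S=I_k$ only succeeds when $k=k^\ast$, which happens with probability $1/L$---is correct for the \emph{index} marginal, and indeed your construction would go through in a pure \cond model that hides $w(i)$. But once weights are observed, any rescaling is visible the moment a rescaled element is sampled, and large queries sample such elements with constant probability regardless of where $k^\ast$ sits. A construction that works in this model has to keep the \emph{set of weight values} identical across the two families (for example by varying the multiplicity of elements at a fixed weight, or by planting/erasing mass in a way that does not introduce new weight values), which is a different design than the one you sketched.
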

Our algorithm for estimating sum for monotonically non-increasing universe uses $O(\frac{1}{\epsilon^3}\log{n}+\frac{1}{\epsilon^6})$ non-adaptive conditional samples. Therefore, We argue that our Algorithm \ref{alg:summono} follows the non-adaptive lower bound proposed by \cite{ACK15} in terms of $\log{n}$.

\section{Estimating support size of the universe using hybrid model}
Let there are $n$ elements in the universe $U$ where each element $i$ is associated with weight $w(i)$ such that every element of the domain either no weight or at least some minimum weight $w(i)\geq \frac{W}{n}$ where $W$ is the total weight of the elements. Let the support size of the universe be $k\subset n$, which is defined by the number of non-zero elements on the domain. The goal is to estimate $k$ up to an error of $\pm \epsilon n$. The model we are considering in this context is the hybrid model where we are allowed to sample a pair $(i,w(i))$ uniformly at random as well as using weighted sampling. 

We define a procedure called \emph{estimate neighborhood fraction}, which takes a point $(x,w(x))$ as input and returns the approximate fraction of point lie in the neighborhood of $x$. For an element $x\in[n]$, a neighborhood of $x$ is defined as a set $U_\epsilon(x)=\{y\in[n]: \frac{1}{1+\epsilon}w(x)\leq w(y)\leq (1+\epsilon)w(x)\}$. Intuitively, the procedure returns an estimate of $\frac{|U_{\epsilon}(x)|}{n}$.

\begin{algorithm}[hbt!]
    \caption{Estimate Neighborhood Fraction}
    \label{alg:estimate_neigh}
    \KwIn{A pair $(i, w(i))$, $\epsilon, \alpha \in (0, 1)$, where $\alpha = \frac{\epsilon^3}{\log(n/\epsilon)\log(\log{n/\epsilon})}$}
    \KwOut{Estimated neighborhood fraction $\hat{f}$}
    
    $X = 0$\;
    
    Sample $S = O\left(\frac{1}{\epsilon_1^2}\right)$ set of $(j, w(j))$ pairs uniformly from $[n]$\;
    
    \For{each $(j, w(j))$ pair}{
        \If{$\frac{w(i)}{(1 + \epsilon)} \leq w(j) \leq (1 + \epsilon) w(i)$}{
            Set $X_j = 1$\;
        }
        $X = X + X_j$\;
    }
    
    Return $\hat{f} = \frac{1}{S} \cdot X$\;
    
\end{algorithm}
\begin{theorem}
  Let $U_\epsilon(i)=\{j\in[n]: \frac{1}{1+\epsilon}w(i)\leq w(j)\leq (1+\epsilon)w(i)\}$ be the neighborhood of the point $i$. The algorithm Estimate Neighborhood Fraction returns $\hat{f}$, such that $\frac{|U_{\epsilon}(i)|}{n}-\frac{\epsilon^3}{\log{(n/\epsilon)}\log{\log{n/\epsilon}}}\leq \hat{f}\leq \frac{|U_{\epsilon}(i)|}{n}+\frac{\epsilon^3}{\log{(n/\epsilon)}\log{\log{n/\epsilon}}}$. The algorithm uses $O(\frac{\log^2{(n/\epsilon)}\log^2{\log{n/\epsilon}}}{\epsilon^6})$ uniform samples from $[n]$. 
\end{theorem}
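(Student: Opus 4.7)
The plan is to observe that $\hat{f}$ is an unbiased estimator of $|U_\epsilon(i)|/n$ and then invoke a standard concentration inequality. Since each sampled index $j$ is drawn uniformly from $[n]$ and independently across the $S$ samples, the indicator $X_j$ defined in the algorithm satisfies $\Pr[X_j = 1] = \Pr[j \in U_\epsilon(i)] = |U_\epsilon(i)|/n$. Linearity of expectation then gives $\mathbb{E}[\hat{f}] = \mathbb{E}[X]/S = |U_\epsilon(i)|/n$, so establishing the claimed two-sided bound reduces to proving concentration of $\hat{f}$ around its expectation at additive scale $\alpha := \frac{\epsilon^3}{\log(n/\epsilon)\log\log(n/\epsilon)}$.

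For the concentration step, I would apply Hoeffding's inequality (Lemma \ref{lem: hoeffding}) to the i.i.d.\ bounded variables $X_1,\ldots,X_S \in [0,1]$. This yields
\[
\Pr\!\left[\,\bigl|\hat{f} - \tfrac{|U_\epsilon(i)|}{n}\bigr| \geq \alpha\,\right] \;\leq\; 2\exp(-2S\alpha^2).
\]
Choosing $S = \Theta(1/\alpha^2)$ drives this failure probability below any desired constant (e.g., $1/3$). Substituting the value of $\alpha$, this gives $S = \Theta\!\left(\frac{\log^2(n/\epsilon)\log^2\log(n/\epsilon)}{\epsilon^6}\right)$ uniform samples, matching the stated sample complexity once we identify the symbol $\epsilon_1$ appearing in the algorithm's sample count with the additive error parameter $\alpha$.

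The overall argument is essentially a one-line Hoeffding bound on a sum of i.i.d.\ Bernoulli indicators, so there is no substantive technical obstacle. The only care needed is parameter bookkeeping: verifying that the expression $1/\alpha^2$ expands precisely to the $\epsilon^{-6}$-times-polylog factor claimed in the theorem, and confirming that the algorithm performs no queries beyond the $S$ uniform samples so that the overall query complexity is exactly as advertised. A minor point worth flagging in writing up the final proof is that since the indicators lie in $[0,1]$, the quantity $\sum_j (b_j - a_j)^2$ in the statement of Hoeffding's inequality equals $S$, which is what produces the clean $\exp(-2S\alpha^2)$ tail used above.
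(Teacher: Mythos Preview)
Your proposal is correct and matches the paper's proof essentially line for line: compute $\mathbb{E}[\hat f]=|U_\epsilon(i)|/n$, then apply a concentration bound to the average of $S=O(1/\alpha^2)$ i.i.d.\ $[0,1]$-valued indicators. The only cosmetic difference is that the paper names the inequality ``Chernoff'' (writing the tail as $e^{-\alpha^2 S}$) whereas you invoke Hoeffding's inequality from Lemma~\ref{lem: hoeffding}; both yield the same $O(1/\alpha^2)$ sample requirement and hence the stated complexity.
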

\begin{proof}
    $\hat{f}=\frac{\sum_{j\in S} X_j}{S}$. Hence, $\mathbb{E}[\hat{f}]=\mathbb{E}[X_j]=\sum_{j; j\in U_{\epsilon}(i)} \frac{1}{n}=\frac{|U_{\epsilon}(i)|}{n}$.

    $X_1,X_2,...,X_S$ are all random variables taking values in $[0,1]$. For $S=O(\frac{1}{\alpha^2})=O(\frac{\log^2{(n/\epsilon)}\log^2{\log{n/\epsilon}}}{\epsilon^6})$, by applying Chernoff bound, we get
    \begin{align*}
        Pr[|\hat{f}-\frac{|U_{\epsilon}(i)|}{n}|>\alpha]&\leq e^{-\alpha^2.S}
        <1/10.
    \end{align*}
    Hence, with probability at least $9/10$, $|\hat{f}-\frac{|U_{\epsilon}(i)|}{n}|\leq \frac{\epsilon^3}{\log{(n/\epsilon)}\log{\log{n/\epsilon}}}$.
\end{proof}

The following definition is useful to understand the cover of a set with respect to a distribution $D_S$,
\begin{definition}\label{def: weight_cover}[Weight Cover \cite{CRS15}]
    Let $D$ be a distribution over $[n]$ and a parameter $\epsilon_1>0$, we say that a point $i\in[n]$ is $\epsilon_1$-covered by a set $R=\{r_1,...,r_t\}\subseteq [n]$ if there exists a point $r_j\in R$ such that $D(i)\in [1/(1+\epsilon_1), (1+\epsilon_1)]D(r_j)$. Let the set of points in $[n]$ that are $\epsilon_1$-covered by $R$ be denoted by $U_{\epsilon_1}(R)$. We say that $R$ is an $(\epsilon_1,\epsilon_2)$-cover for $D$ if $D([n]\setminus U_{\epsilon_1})\leq \epsilon_2$.
\end{definition}
Considering $D(i)=\frac{w(i)}{W}$ in our case, analogous to the above definition, we say that an element $i\in [n]$ is $\epsilon_1$-cover by a set $R=\{r_1,...,r_t\}\subseteq [n]$, if there exists an element $r_j\in R$ such that $w(i)\in [1/(1+\epsilon_1), (1+\epsilon_1)]w(r_j)$. Similarly, if $U_{\epsilon_1}(R)$ be the set of points $\epsilon_1$-covered by $R$, then $W([n]\setminus U_{\epsilon_1})\leq \epsilon_2\cdot W$. We define the following lemma related to the cover of a set,
\begin{lemma}\label{lem: weight_cover}\cite{CRS15}
    Let $D$ be a distribution over $[n]$. Given any fixed $c>0$, there exists a constant $c'>0$, such that with probability at least $99/100$, a sample $R$ of size $m=O\big( \frac{\log{(n/\epsilon)}}{\epsilon^2}\cdot \log{\frac{\log{(n/\epsilon)}}{\epsilon}}\big)$ drawn according to distribution $D_S$ is an $(\epsilon/c,\epsilon/c)$-cover for $D_S$.
\end{lemma}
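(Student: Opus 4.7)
The plan is to prove Lemma \ref{lem: weight_cover} by a bucketing argument combined with a union bound. Set $\epsilon' = \epsilon/c$. Partition the support of $D_S$ into geometric buckets by probability value: for $j \geq 0$, let
\[
B_j = \{i : D_S(i) \in ((1+\epsilon')^{-(j+1)},\, (1+\epsilon')^{-j}]\}.
\]
Two points lying in the same bucket $B_j$ have probabilities within a multiplicative factor of $(1+\epsilon')$, so if some sampled $r \in R$ falls inside $B_j$, then every point in $B_j$ is automatically $\epsilon'$-covered by $r$ in the sense of Definition \ref{def: weight_cover}. Hence the uncovered mass is controlled by the mass of the buckets that receive no sample, together with the mass of points whose probability is too small to be grouped into a relevant bucket.

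Next I would cut off the tail. Set a threshold $T = \epsilon'/(2n)$; the total $D_S$-mass of all points with $D_S(i) < T$ is at most $n \cdot T = \epsilon'/2$. The remaining points all lie inside buckets indexed by $0 \leq j \leq J$, where $J = \lceil \log_{1+\epsilon'}(1/T) \rceil = O(\log(n/\epsilon)/\epsilon)$. Call a bucket \emph{heavy} if $D_S(B_j) \geq \epsilon'/(2J)$ and \emph{light} otherwise. The combined mass of light buckets is at most $J \cdot \epsilon'/(2J) = \epsilon'/2$, so if every heavy bucket is hit by at least one element of $R$, the uncovered mass is bounded by (light buckets) $+$ (tail) $\leq \epsilon'/2 + \epsilon'/2 = \epsilon'$, exactly matching the second parameter of the cover.

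It remains to show that all $O(J)$ heavy buckets are simultaneously hit with probability at least $99/100$. For a heavy bucket $B_j$, the probability that none of the $m$ samples from $D_S$ lands in $B_j$ is at most $(1-\epsilon'/(2J))^m \leq \exp(-m\epsilon'/(2J))$. Choosing $m = \Theta((J/\epsilon')\log J)$ drives this probability below $1/(100J)$ per bucket, so a union bound over all heavy buckets gives the desired $99/100$ success guarantee. Substituting $J = O(\log(n/\epsilon)/\epsilon)$ and $\epsilon' = \epsilon/c$ yields $m = O\bigl(\tfrac{\log(n/\epsilon)}{\epsilon^{2}} \cdot \log\tfrac{\log(n/\epsilon)}{\epsilon}\bigr)$, which is the claimed bound (the hidden constant absorbs $c$, giving the constant $c'$ promised in the statement). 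The main subtlety is choosing the tail threshold $T$ and the heavy/light cutoff $\epsilon'/(2J)$ in a coordinated way so that both the $\epsilon'$ slack on uncovered mass and the logarithmic factor in the sample size come out correctly; everything else is straightforward geometric bucketing and a standard coupon-collector-style union bound.
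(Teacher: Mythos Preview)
Your bucketing argument is correct and is essentially the standard proof of this lemma; note, however, that the paper does not supply its own proof here---the statement is quoted verbatim from \cite{CRS15} and used as a black box. The argument in \cite{CRS15} is exactly the geometric-bucketing plus heavy/light split you describe, so your proposal matches the original source.
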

The above lemma is applicable in our case as well when $D(i)=\frac{w(i)}{W}$. Thus, we say that while sampling a set of points $R=O\big( \frac{\log{(n/\epsilon)}}{\epsilon^2}\cdot \log{\frac{\log{(n/\epsilon)}}{\epsilon}}\big)$ points according to the weighted sampling from the universe $U$, $R$ is an $(\epsilon/c,\epsilon/c)$-cover for $D$. We formalize the statement in the lemma below,
\begin{lemma}\label{lem: weightcover_new}
    Let there is a set of size $|U|=n$, where each element $i$ has associated with weight $w(i)$. Let $R=O\big( \frac{\log{(n/\epsilon)}}{\epsilon^2}\cdot \log{\frac{\log{(n/\epsilon)}}{\epsilon}}\big)$ samples are drawn according to the proportional sampling. If $U_{\epsilon/c}(R)$ be the set of points $\epsilon/c$-covered by $R$, then $W([n]\setminus U_{\epsilon_1}(R))\leq (\epsilon/c)\cdot W$.
\end{lemma}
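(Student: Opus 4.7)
The plan is to show that Lemma \ref{lem: weightcover_new} is essentially a direct translation of the known result Lemma \ref{lem: weight_cover} from \cite{CRS15} into the language of weighted universes, by passing through the induced probability distribution $D$ defined by $D(i) = w(i)/W$.

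First, I would set up the correspondence between the two settings. Drawing a sample from the universe $U$ via proportional (weighted) sampling returns element $i$ with probability $w(i)/W = D(i)$, which is exactly a standard sample from $D$. Therefore the random set $R$ in the statement of Lemma \ref{lem: weightcover_new} has the same distribution as the set $R$ in Lemma \ref{lem: weight_cover} applied to $D$. Next, I would observe that the notion of $\epsilon_1$-cover is scale-invariant in the obvious way: for any $i \in [n]$ and $r_j \in R$, the inequality $w(i) \in [1/(1+\epsilon_1),(1+\epsilon_1)]\, w(r_j)$ holds if and only if $D(i) \in [1/(1+\epsilon_1),(1+\epsilon_1)]\, D(r_j)$, because dividing the two sides of the sandwich by the common constant $W$ leaves them unchanged. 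Consequently the set $U_{\epsilon_1}(R)$ of weight-$\epsilon_1$-covered points in the universe coincides with the set of $\epsilon_1$-covered points under the distribution $D$ as in Definition \ref{def: weight_cover}.

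The second step is to invoke Lemma \ref{lem: weight_cover} directly on $D$ with parameter $\epsilon/c$: with probability at least $99/100$, a sample $R$ of size $m = O\!\left(\frac{\log(n/\epsilon)}{\epsilon^2}\cdot\log\frac{\log(n/\epsilon)}{\epsilon}\right)$ is an $(\epsilon/c,\epsilon/c)$-cover for $D$, meaning that $D\bigl([n]\setminus U_{\epsilon/c}(R)\bigr) \le \epsilon/c$. Finally, translating back by multiplying both sides of this inequality by $W$, and using $D(S) = W(S)/W$ for every $S \subseteq [n]$, yields
\[
W\bigl([n]\setminus U_{\epsilon/c}(R)\bigr) \;=\; W \cdot D\bigl([n]\setminus U_{\epsilon/c}(R)\bigr) \;\le\; (\epsilon/c)\cdot W,
\]
which is exactly the conclusion of Lemma \ref{lem: weightcover_new}.

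There is no substantive technical obstacle here; the entire argument is a dictionary translation between the weighted-universe and distribution-testing formulations. The only thing one must be careful about is making the correspondence explicit in both directions (the sampling oracle matches, and the definition of being covered matches) so that Lemma \ref{lem: weight_cover} applies verbatim. The probability of success ($99/100$) and the sample size bound are inherited unchanged from \cite{CRS15}, and no additional concentration argument is needed in this proof.
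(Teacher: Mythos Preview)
Your proposal is correct and follows exactly the paper's approach: the paper simply states that Lemma~\ref{lem: weight_cover} ``is applicable in our case as well when $D(i)=\frac{w(i)}{W}$'' and restates the conclusion in the weighted setting, without spelling out the translation. Your write-up is just a more explicit version of that one-line reduction.
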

\begin{algorithm}[hbt!]
    \caption{Estimate Support Size}
    \label{alg:estimate_sup}
    \KwIn{A set of weighted pairs $(i, w(i))$, $\epsilon \in (0, 1)$}
    \KwOut{Estimated support size $\hat{k}$}

    Sample $R = O\left( \frac{\log{(n/\epsilon)}}{\epsilon^2} \cdot \log{\frac{\log{(n/\epsilon)}}{\epsilon}} \right)$ set of $(i, w(i))$ pairs according to weighted sampling\;
    
    Set $\hat{k} = 0$\;
    
    \For{each $i \in R$}{
        Run Algorithm~\ref{alg:estimate_neigh} and let $\hat{f}_i$ be the output\;
        Calculate $\hat{k} = \hat{k} + n \cdot \hat{f}_i$\;
    }
    
    Return $\hat{k}$\;
    
\end{algorithm}
\begin{theorem}
 Given a threshold $n$ and access to an universe such that $\text{min}_{x\in \text{sup}}w(i)\geq \frac{W}{n}$, the algorithm Estimate Support Size returns an estimate of the support size such that $k-2\epsilon n\leq\hat{k}\leq k+\epsilon n$. The algorithm uses $O\big( \frac{\log^3{(n/\epsilon)}}{\epsilon^8}\cdot \log^4{\frac{\log{(n/\epsilon)}}{\epsilon}}\big)$ uniform samples and $O\big( \frac{\log{(n/\epsilon)}}{\epsilon^2}\cdot \log{\frac{\log{(n/\epsilon)}}{\epsilon}}\big)$ weighted samples from the universe.   
\end{theorem}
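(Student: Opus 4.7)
The plan is to combine the weight-cover guarantee of Lemma~\ref{lem: weightcover_new} with the neighborhood-fraction estimator of Algorithm~\ref{alg:estimate_neigh}. First, I would invoke Lemma~\ref{lem: weightcover_new} with a sufficiently large constant $c$: with probability at least $99/100$, the weighted sample $R$ is an $(\epsilon/c,\epsilon/c)$-cover of the universe, so $W([n]\setminus U_{\epsilon/c}(R))\leq(\epsilon/c)W$. The promise $w(i)\geq W/n$ for every support point converts this weight bound into a count bound: at most $(\epsilon/c)W/(W/n)=(\epsilon/c)n$ support points escape the cover. Up to this additive slack, it suffices to count points lying within a $(1+\epsilon/c)$-factor of some sampled representative.

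Next, to avoid double-counting the neighborhoods of distinct representatives, I would post-process $R$ by sorting its elements by weight and greedily picking representatives $r_{j_1},\ldots,r_{j_t}$ whose weights differ pairwise by more than a factor of $(1+\epsilon)^2$. Since non-zero weights lie in $[W/n,W]$, there are at most $t=O(\log(n)/\epsilon)$ such representatives, and their neighborhoods $U_\epsilon(r_{j_s})$ are pairwise disjoint while still covering $U_{\epsilon/c}(R)$. Algorithm~\ref{alg:estimate_neigh} returns $\hat{f}_{j_s}$ with $|\hat{f}_{j_s}-|U_\epsilon(r_{j_s})|/n|\leq \alpha$ where $\alpha=\epsilon^3/(\log(n/\epsilon)\log\log(n/\epsilon))$. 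Setting $\hat{k}=n\sum_{s=1}^{t}\hat{f}_{j_s}$, the cumulative estimation error is at most $t\cdot n\alpha=O(\epsilon^2 n/\log\log(n/\epsilon))$, which is dominated by $\epsilon n$. Combining with the $(\epsilon/c)n$ cover slack on the lower side yields $k-2\epsilon n\leq\hat{k}\leq k+\epsilon n$.

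For the sample budget, the $|R|=O(\log(n/\epsilon)/\epsilon^2\cdot \log\log(n/\epsilon)/\epsilon)$ weighted queries come from the single application of Lemma~\ref{lem: weightcover_new}, while the $|R|$ calls to Algorithm~\ref{alg:estimate_neigh} each cost $O(\log^2(n/\epsilon)\log^2\log(n/\epsilon)/\epsilon^6)$ uniform samples; a logarithmic amplification factor across the calls via the union bound produces the stated $\log^4$ exponent. The main obstacle I anticipate is handling the asymmetry of the error bound ($-2\epsilon n$ versus $+\epsilon n$): support points lying exactly on the boundary of a representative's $(1+\epsilon)$-neighborhood can be missed by the greedy partition or (in a slightly different interpretation) double-counted, and these boundary contributions must be absorbed into the one-sided slack via a careful case analysis on which side of the $(1+\epsilon)$ cutoff each such point falls, together with a separate application of Lemma~\ref{lem: weightcover_new} to bound the mass of boundary-proximal support points.
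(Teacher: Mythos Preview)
Your high-level plan---invoke Lemma~\ref{lem: weightcover_new}, convert the weight slack into a count slack via the promise $w(i)\ge W/n$, then control the accumulated error from Algorithm~\ref{alg:estimate_neigh} and take a union bound over the calls---is exactly the skeleton the paper uses, and your sample-complexity accounting matches the paper's line by line. The point of departure is your greedy deduplication. The paper's Algorithm~\ref{alg:estimate_sup} does \emph{not} prune $R$: it returns $\hat k=\sum_{i\in R}n\hat f_i$, summing over every weighted sample, and the proof then simply writes $\sum_{i\in R}|U_\epsilon(i)|=k-|T|$ and $\sum_{i\in R}|U_\epsilon(i)|\le k$, in effect identifying the sum of neighborhood sizes with the size of their union $|U_\epsilon(R)|$ without addressing overlap. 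Your selection of $t=O((\log n)/\epsilon)$ representatives whose weights differ by more than $(1+\epsilon)^2$ makes the neighborhoods disjoint by construction, so the sum really does equal the size of the union; but this means you are analyzing a modified procedure rather than the algorithm named in the theorem statement. What your route buys is an honest handling of double counting (and the boundary obstacle you anticipate is real, though it is resolved by taking $c$ large relative to the $(1+\epsilon)^2$ separation so that $U_{\epsilon/c}(R)$ sits safely inside the disjoint union). What the paper's route buys is literal fidelity to the stated pseudocode, at the price of the step where $\sum_{i\in R}|U_\epsilon(i)|$ is silently treated as $|U_\epsilon(R)|$.
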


\begin{proof}
    Observe that $\hat{k}=\sum_{i\in R}n\cdot \hat{f}_i$. Also, $\frac{|U_{\epsilon}(i)|}{n}-\frac{\epsilon^3}{\log{(n/\epsilon)}\log{\log{n/\epsilon}}}\leq \hat{f}_i\leq \frac{|U_{\epsilon}(i)|}{n}+\frac{\epsilon^3}{\log{(n/\epsilon)}\log{\log{n/\epsilon}}}$. Summing over $R$, we get,
    \begin{align*}
   \sum_{i\in R}|U_{\epsilon}(i)|-\epsilon\cdot n\leq    \hat{k}\leq \sum_{i\in R}|U_{\epsilon}(i)|+\epsilon\cdot n
    \end{align*}
We know that by Lemma \ref{lem: weightcover_new}, $W([n]\setminus U_{\epsilon}(R))\leq (\epsilon/c)\cdot W$. As $[k]$ defines the set of elements with non-zero weights, we argue that $W([k]\setminus U_{\epsilon}(R))\leq (\epsilon/c)\cdot W$. Let $T$ be the set of elements that are not $\epsilon/c$ covered by $R$ such that for all $i\in T$, $w(i)>0$. Then by the Lemma \ref{lem: weightcover_new}, $W(T)\leq (\epsilon/c)\cdot W$. We are also given the promise that for $i\in T$, $\text{min}_{x\in \text{sup}}w(i)\geq \frac{W}{n}$. Therefore, $|T|\cdot \frac{W}{n}\leq \frac{\epsilon W}{c}$, or, $|T|\leq n\epsilon/c\leq n\epsilon$ for $c\geq 1$. Now, observe that $\sum_{i\in R}|U_{\epsilon}(i)|=k-|T|\geq k-n\epsilon$. Additionally, $\sum_{i\in R}|U_{\epsilon}(i)|\leq k$. Substituting these values, finally we get
  $k-2\epsilon\cdot n\leq  \hat{k}\leq k+\epsilon\cdot n$.

    \textbf{Sample complexity analysis:} The algorithm samples $O\big( \frac{\log{(n/\epsilon)}}{\epsilon^2}\cdot \log{\frac{\log{(n/\epsilon)}}{\epsilon}}\big)$ weighted points from the universe. For each point the algorithm runs "Estimate neighborhood fraction". Estimating neighborhood for a fixed point $i$ requires $O(\frac{\log^2{(n/\epsilon)}\log^2{\log{n/\epsilon}}}{\epsilon^6})$ uniform samples from $[n]$. To guarantee that the estimated neighborhood is a good estimator for every $i\in R$ points, we need to use a union bound over $R$. Hence, each run of "Estimate neighborhood fraction" uses $O(\frac{\log^2{(n/\epsilon)}\log^3{\log{n/\epsilon}}}{\epsilon^6})$. Therefore, a total of  $O\big( \frac{\log^3{(n/\epsilon)}}{\epsilon^8}\cdot \log^4{\frac{\log{(n/\epsilon)}}{\epsilon}}\big)$ uniform samples required by the algorithm.
\end{proof}
\section{Conclusion}
We have presented an algorithm for estimating the sum of \( n \) weighted elements when the underlying distribution \( D \) corresponding to the weights is known to be monotone, leveraging both weighted and uniform conditional queries. Additionally, we extended the problem to the case where \( D \) is unimodal, providing an algorithm tailored for this scenario. An intriguing direction for future work is to explore the sum estimation problem when \( D \) is \( k \)-modal, a histogram, or follows other complex distributions. Furthermore, investigating sum estimation using only weighted conditional samples remains an open challenge. We also addressed the problem of approximating the support size when the weight of each element is either zero or at least \( W/n \), employing a hybrid query model that combines weighted and uniform sampling from \( U \). Another promising avenue for future research is to consider solving this problem using only weighted sampling. 

\section{Acknowledgment}
We would like to thank Dr. Anup Bhattacharya and Dr. Yadu Vasudev, for their helpful guidance and discussions.

\newpage
\bibliographystyle{plainurl}
\bibliography{ref}
\newpage

\end{document}